\numberwithin{equation}{section} 
\newtheorem{theorem}{Theorem}[section] 
\newtheorem{proposition}[theorem]{Proposition} 
\newtheorem{corollary}[theorem]{Corollary} 
\theoremstyle{definition}
\newtheorem{definition}[theorem]{Definition}
\newtheorem{assumption}[theorem]{Assumption}
\newcommand{\beq}{\begin{equation}}
\newcommand{\eeq}{\end{equation}}
\newcommand{\be}{\begin{equation*}}
\newcommand{\ee}{\end{equation*}}
\newcommand{\n}{\noindent}
\newcommand{\RE}{\mathbb R}
\newcommand{\erre}{\mathbb R}
\newcommand{\BB}{\mathscr B}
\newcommand{\B}{\mathcal B}
\newcommand{\DD}{\mathscr D}
\newcommand{\MM}{\mathcal{M}}
\newcommand{\NN}{\mathcal{N}}
\newcommand{\lf}{\left}
\newcommand{\ri}{\right}
\newcommand{\al}{\alpha}
\newcommand{\la}{\lambda}
\newcommand{\de}{\delta}
\newcommand{\xx}{\langle x \rangle}
\newcommand{\lan}{\langle}
\newcommand{\ran}{\rangle}
\renewcommand{\Im}{\operatorname{Im}\,}
\renewcommand{\leq}{\leqslant}
\renewcommand{\geq}{\geqslant}
\newcommand{\EE}{\mathcal E}
\newcommand{\HH}{\mathcal H}
\newcommand{\WW}{\mathcal W}
\newcommand{\xv}{\mathbf{x}}
\newcommand{\kv}{\mathbf{k}}
\newcommand{\kvp}{\mathbf{k}^{\prime}}
\newcommand{\Qb}{\overline Q}
\newcommand{\bdm}{\begin{displaymath}}
\newcommand{\edm}{\end{displaymath}}
\newcommand{\bdn}{\begin{eqnarray}}
\newcommand{\edn}{\end{eqnarray}}
\newcommand{\bay}{\begin{array}{c}}
\newcommand{\eay}{\end{array}}
\newcommand{\ben}{\begin{enumerate}}
\newcommand{\een}{\end{enumerate}}
\newcommand{\beqn}{\begin{eqnarray}}
\newcommand{\eeqn}{\end{eqnarray}}
\newcommand{\R}{\mathbb{R}}
\newcommand{\N}{\mathbb{N}}
\newcommand{\diff}{\mathrm{d}}
\newcommand{\amp}{A_V(\kv , \kvp)}
\newcommand{\ampe}{A_{\mathrm{eff}}(\kv , \kvp; \la)}
\newcommand{\eigenv}{\phi_{V,\kv}}
\newcommand{\aeff}{a_{\mathrm{eff}}(\lambda)}
\newcommand{\tx}{\textstyle}
\newcommand{\bra}[1]{\lf\langle #1\ri|}
\newcommand{\ket}[1]{\lf|#1 \ri\rangle}
\newcommand{\braket}[2]{\lf\langle #1|#2 \ri\rangle}
\newcommand{\mean}[3]{\bra{#1}#2\ket{#3}}
\begin{document}
\title{Expansion of the resolvent in a Feshbach model}

\author[1]{Raffaele Carlone}
\affil[1]{Dipartimento di Matematica e Applicazioni ``R. Caccioppoli'', Universit\'a di Napoli ``Federico II'', Via Cinthia, Monte S. Angelo, 80126 Napoli, Italy}

\author[2]{Domenico Finco}
\affil[2]{Facolt\`a di Ingegneria, Universit\`a Telematica Internazionale Uninettuno, Corso V. Emanuele II 39,  00186 Roma, Italy}



\maketitle

\begin{abstract}
\noindent
In this paper we extend the results proved in (\cite{ccft}) about Feshbach resonances in a multichannel Hamiltonian $\mathcal{H}$, proving a low energy expansion of the resolvent $(\mathcal{H}-k^{2})^{-1}$  as $k\to 0$ in the resonant case. 
\end{abstract}	

\begin{center}
\emph{Dedicated to Gianfausto Dell'Antonio on the occasion of his 85th birthday}
\end{center}

\section{Introduction}

\noindent
The physics of ultracold quantum gases and molecular quantum gases is a research which had had a steep growth in the last twenty years induced by incredible progresses, 
both from the experimental point of view and from the theoretical one.
The wide range of applications covers atomic and molecular physics, condensed matter and few and many-body physics.  

A major breakthrough in this field was the experimental observation of a dilute Bose gas condensation in 1995 (\cite {AEMWC}) as in the Einstein predictions of 1925 .
After this fundamental experimental realization, a big effort was made for a deep understanding of the physical processes involved.
A crucial point was to go beyond the mean field physics, to study interactions between ultra cold atoms and observe long and short range correlation phenomena (\cite{KZ}).   
The extraordinary degree of control needed on such systems in order to reach these extreme conditions,  was really challenging.

From the experimental point of view, besides all the different techniques to control the physical properties of condensate, two of them in particular had  greater success: one is the realization of optical lattices of different spatial dimensions (\cite{B},\cite{Ch}), while the second is the tunability of the scattering length using  Feshbach resonances (see \cite{CGJT} and \cite{IASMSK} for a review  ).

In a dilute quantum gas  the density and mobility condition are such that only the two-body scattering processes are relevant. Moreover in many situations, since the typical energies allow 
only elastic scattering, the most relevant parameter is the scattering length and the possibility to tune it provides an effective mean to control the interaction.

A key feature of Feshbach resonances is the presence of an open channel (where scattering processes are allowed) and a closed channel (where scattering processes are forbidden) with bound states.
It may happen that a bound state of the closed channel crosses the ionization threshold of the open channel, that is the bottom of the continuous spectrum of the hamiltonian
of the open channel, and strongly interacts deeply influencing the scattering process, see \cite{fis1}, even when the interaction between channel is weak. 
This is often understood in the physical literature as virtual scattering
process with a metastable state at the bottom of the spectrum.

\noindent

 \begin{figure}\label{fig}
\begin{center}
 \includegraphics[height=7cm]{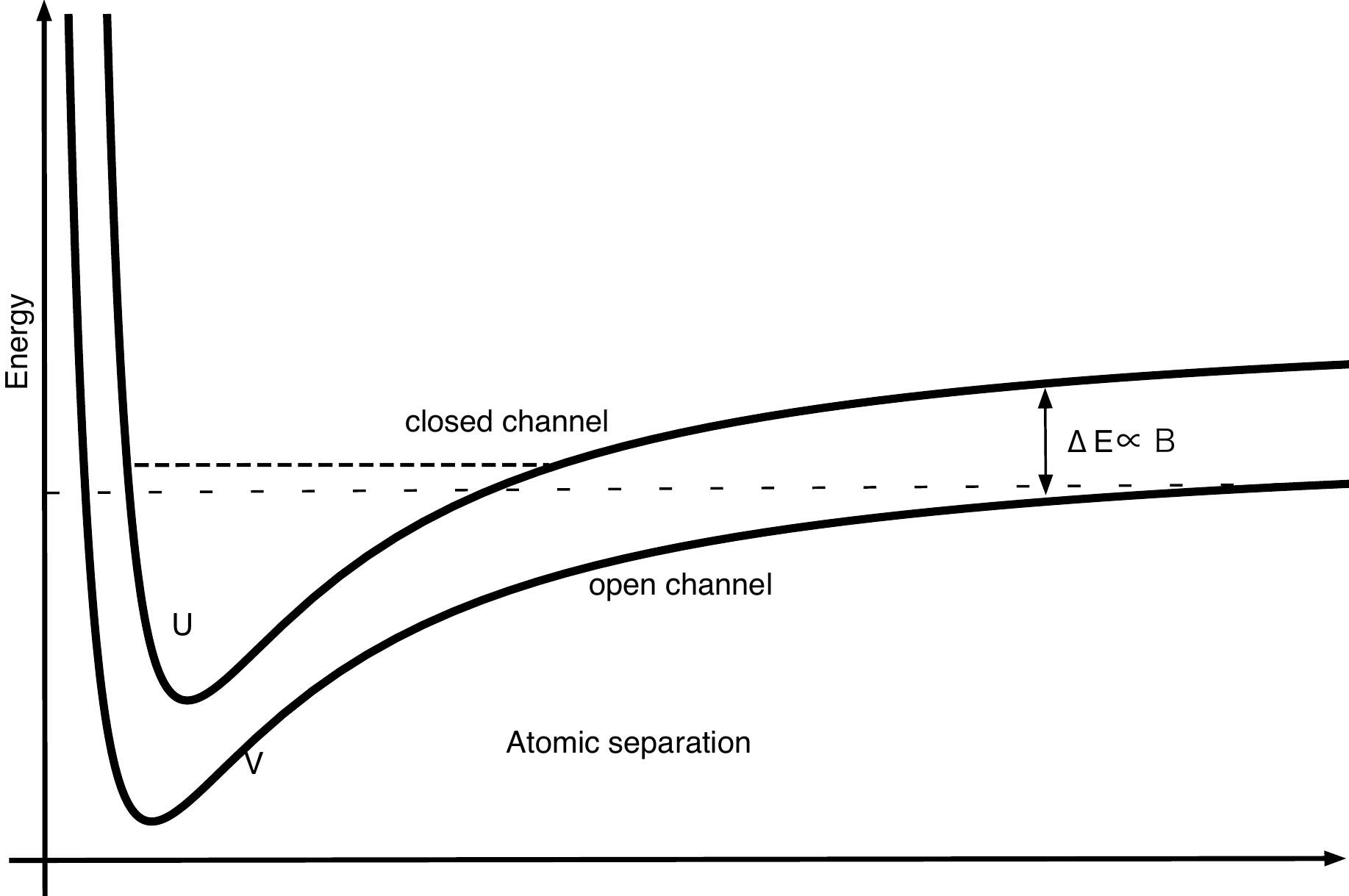}
 \caption{\footnotesize{In the figure are plotted the potentials for an open channel and a closed channel with a bound state as function of atomic separation for alkali atoms. } }
\end{center}
 \end{figure}

\noindent
In modern realizations the splitting is realized by applying an external magnetic field $B$ to, for instance, alkali atoms and it is due to the coupling of states with different spin quantum numbers
with the magnetic field. In this way the energy splitting $\Delta E$ and energy levels of the bound states of the closed channel can be externally controlled and it is possible to realize
the conditions for a Feshbach resonance.

\noindent In the physics literature it was obtained a formula relating the scattering length with the magnetic field around the value of the Feshbach resonance (\cite{PS})
$$ a(B)=a \left(1-\frac{\Delta}{B-B_{0}}\right)$$
where $a$ is the scattering length far from the resonant values, $B_{0}$ is the value of the magnetic field at the resonance and $\Delta$ is  the resonance width.

\noindent 
The rigorous analysis of resonances  has a very long story (\cite{E}):  the first mathematical model for multichannel hamiltonian traces back to the original paper of Friedrichs (\cite{fiss1}) in 1948. Many mathematical features of resonances were studied using different strategies, 
as the dilation-analytic technique (\cite{HS}) or simplified models using point interactions (\cite{ccf},\cite{ccf2},\cite{ccf3}).  

Recently in \cite{ccft} the formation of Feshbach resonances and the dependence of the scattering length from the magnetic field were rigorously studied, proving results about the existence and localization
of resonances and the behavior of the scattering length near a resonance.

In section 2, we fix the notation and we recall the main results of \cite{ccft}. In section 3 we prove a low energy expansion of the resolvent of a matrix hamiltonian when
a Feshbach resonance is present; our main result is Theorem \ref{ourmain}.

\section{Existence of Feshbach Resonances}

In this section we fix some notational conventions used in the paper and we recall the main results of cite{ccft}.

\n
Bold face letters denote vectors in $ \R^3 $, e.g., $ \xv $, while scalars are denoted by regular letters, e.g., $ E $. When there is no ambiguity we also use the notation $ x : = |\xv| $ for the modulus of a vector $ \xv $.

\n
Since we always deal with functions on $ \R^3 $, we often omit the base space in Banach space notation, i.e., $ L^p : = L^p(\R^3) $; moreover when there is no possible confusion we denote $ \lf\| \: \cdot \: \ri\|_{L^p} =: \lf\| \: \cdot \: \ri\|_p $.

\noindent
We recall the definition of weighted Hilbert spaces: set $ \lan x \ran : = \sqrt{1 + x^2} $ for short, then, for any $ s \geq 0 $, we define 
\beq
	\label{eq:wnorm}
	\lf\| f \ri\|_{L^2_s} : = \lf\| \lan x \ran^s f \ri\|_2.
\eeq

\n
The closure of $ C^{\infty}_0 $ w.r.t. the above norm is denoted by $ L^2_s $. The weighted Sobolev space $ H^2_s $, $ s \geq 0 $, is defined analogously as the closure of $ C^{\infty}_0 $ w.r.t. the norm
\beq
	\label{eq:wSnorm}
	\lf\| f \ri\|_{H^2_s} : = \lf\| f \ri\|_{L^2_s} + \lf\| \Delta f \ri\|_{L^2_s}.
\eeq
The conventional Sobolev spaces $ H^p $, $ p \in \R $ can be defined via Fourier transform as the closure of $ C^{\infty}_0(\R^3) $ w.r.t. the norms 
\beq
	\lf\| f \ri\|_{H^p} : = \big\| \lan k \ran^p \hat{f} \big\|_2,
\eeq
 where we use the following convention for the Fourier transform
\beq
	\hat{f}(\kv) : = \frac{1}{(2\pi)^{3/2}} \int_{\R^3} \diff \xv \: e^{-i \kv \cdot \xv} f(\xv).
\eeq
By the properties of the Fourier transform and a simple exchange of the role of $ \xv $ and $ \kv $, one easily gets
\beq\label{LsHs}
	 \lf\| f \ri\|_{L^2_s}^2 = \int_{\R^3} \diff \xv \: \lf(1 +  x^2 \ri)^{s} \lf| f(\xv) \ri|^2 = \int_{\R^3} \diff \xv \: \lf( 1 + x^2 \ri)^{s} \Big| \widehat{\hat{f}}(-\xv) \Big|^2 = \big\| \hat{f} \big\|_{H^s}^2.
\eeq
Hence, we obtain the useful identity
\beq
	\label{eq:wSnorm-identity}
	\lf\| f \ri\|_{H^2_s} = \big\| \hat{f} \big\|_{H^s} + \big\| k^2 \hat{f} \big\|_{H^s}.
\eeq

We recall some classical results on spectral and scattering theory mostly taken from \cite{A,I}, which will be used in the proofs.  We denote by $ \mathscr{B} $ the Banach space of continuous functions 
vanishing at infinity equipped with the $\sup$ norm. We also denote by $ {\mathcal B}(L^2) $ the space of bounded linear operators on $ L^2 $ and, more in general, $ \mathcal{B}(X,Y) $ stands for the space of continuous linear transformations between two Banach spaces $ X $ and $ Y $. Similarly, $ \B_0 (X,Y) $ is the space of compact operators from $ X $ to $ Y $ and $  \B_0(X) : =  \B_0(X,X) $.

Following (\cite{ccft}) we now make the mathematical setting more precise. We consider a multi-channel scattering of a particle in dimension three: we assume that there are an {\it open channel}, where the scattering is energetically possible, and a {\it closed} one where any scattering process is forbidden because of an energy constraint (see figure \ref{fig}). 

We describe the system by the following matrix Hamiltonian 
acting on the Hilbert space  ${\mathscr H} = L^2 (\RE^3) \oplus  L^2 (\RE^3) $:
\beq
\HH = 
\left(
\begin{array}{cc}
-\Delta + V & W \\
 W & -\Delta + U +\la
\end{array}
\right) = \HH_0 + \WW, 
\eeq
\beq
\HH_0 = 
\left(
\begin{array}{cc}
-\Delta + V & 0 \\
 0 & -\Delta + U +\la
\end{array}
\right),
\qquad
\WW = 
\left(
\begin{array}{cc}
0 & W \\
 W & 0
\end{array}
\right), 
\eeq
where 
\beq
	\label{eq:1bh}
	H_V=-\Delta +V,	\qquad		 H_U=-\Delta +U
\eeq

The starting point of our investigation is the eigenvalue equation for the matrix Hamiltonian $ \HH $: 
\beq\label{eigenequa}
\HH \Psi_\kv = k^2 \, \Psi_\kv
\eeq

\n
Writing $\Psi_\kv=(\varphi_\kv, \xi_\kv)^t$, where $t$ denotes transposition,  equation \eqref{eigenequa} is equivalent to the system
\beq \label{geneigen}
\begin{cases}
(-\Delta + V)\varphi_\kv + W \xi_\kv =  k^2 \varphi_\kv, \\    
( -\Delta + U +\la )\xi_\kv+ W \varphi_\kv = k^2 \xi_\kv
\end{cases} .
\eeq
Since we are interested in the low energy behavior, we can restrict to the energies  
\beq
	\label{eq:condition-k}
	0<k^2<\la.
\eeq
For $k^2 - \lambda$ in the resolvent set of $H_U$, $ -\Delta + U +\la -k^2$ has a bounded inverse $R_U(k^2-\la)$  in $\B(L^2)$ and the system \eqref{geneigen} is equivalent to the coupled integral equations
\beq \label{geneigen2}
\begin{cases}
 \varphi_\kv + R_V (k^2) \, W\, \xi_\kv = \eigenv, \\
 \xi_\kv + R_U(k^2-\la)\, W\, \varphi_\kv= 0,
\end{cases}
\eeq
where we have denoted by $ \eigenv $ the generalized eigenfunctions of $H_V$.
 The resolvent $ R_V(k^2) $ is defined as a boundary value from the upper half plane.
From \eqref{geneigen2} one sees that the problem is reduced to find the solution  $\varphi_\kv$ of the equation
\beq \label{geneigen3}
 \varphi_\kv - R_V (k^2) \, W\, R_U(k^2-\la)\, W\, \varphi_\kv = \eigenv .
\eeq

\begin{definition}[Ikebe class $ I_n $]
	\label{def:ikebe}
	\mbox{}	\\
We say that a measurable function $V$ belongs to the Ikebe class $I_n(\R^3)$, $ n \in \N $, if $V\in L^2(\R^3)$, $ V $ is locally H\"older continuous except for a finite number of points and there exists $R_0>0$ and $\de>0$ such that 
	\beq
		\label{eq:ikebe-decay}
		|V(\xv)| \leq \dfrac{c}{x^{n+\de}},	\qquad		\mbox{for } x\geq R_0.
	\eeq
\end{definition}

\begin{assumption} \label{pot}
We assume that 
	\begin{itemize}
		\item[a)] $U\in I_2(\R^3)$;
		\item[b)] $V \in I_4(\R^3) \cap L^3(\R^3)$;
		\item[c)] $W \in I_3(\R^3) \cap L^3(\R^3)$. 
	\end{itemize}
\end{assumption}

\begin{assumption} \label{boundst}
Under this assumptions $H_V$ and $H_U$ are self-adjoint operators on $H^2$.
We assume that 
\begin{itemize}
	\item[a)] $H_U  $ has $N \geq 1$ negative simple eigenvalues $ E_0 < E_1< \cdots < E_{ N-1} < 0 $, with corresponding eigenvectors $ \eta_0, \eta_1, \ldots, \eta_{N-1} \in L^2(\R^3) $;
	\item[b)]  $H_V \geq 0 $ and zero is  neither an eigenvalue nor a  resonance;
	\item[c)]  $ \ker \lf( R_V^{1/2}(0) W \ri) = \lf\{ 0 \ri\} $ in $L^2(\erre^3)$.
\end{itemize}
\end{assumption}

\noindent
We denote by $ \EE$ the set of eigenvectors related to positive eigenvalues of $ \HH $, i.e.,
\beq
	\EE : = \sigma_{\mathrm{pp}}(\HH) \cap \R^+.
\eeq
\noindent 
	\begin{proposition}[{Generalized eigenfunctions}]
		\label{eq:solution-scattering}
		\mbox{}	\\
		Let Assumption \ref{pot} hold true and let $ \lambda > 0 $ be fixed. Then, for any $ \kv \in \R^3 $ with $ k^2 \in (0,\la) \setminus \EE$ and
$k^2-\la \neq E_j$, for $j=0, \ldots, N-1$, equation \eqref{geneigen3} admits a unique continuous solution $ \varphi_\kv $, such that
$ \varphi_\kv - \eigenv \in \BB$. Furthermore, $ \varphi_\kv $ satisfies the asymptotics
		\beq
			\label{eq:varphi-asymptotics}
			\varphi_{\kv}(\xv) \underset{x \to +\infty}{\simeq} e^{i \kv \cdot \xv} + \ampe \frac{e^{i k x}}{x},
		\eeq
		with
		\beq \label{scatamp}
			\ampe = \tx\frac{1}{4 \pi} \mean{\phi_{V,\kvp}}{W\, R_U(k^2-\la)\, W}{\varphi_{\kv}} + \amp,
		\eeq
		where $ \amp $ is the scattering amplitude associated to the potential $ V $.
	\end{proposition}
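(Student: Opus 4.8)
The plan is to treat \eqref{geneigen3} as a Fredholm equation of the second kind on the Banach space $\BB$ of continuous functions vanishing at infinity, eliminating the closed channel and reducing everything to the Ikebe--Agmon theory (\cite{A,I}) for an effective single--channel Schr\"odinger operator. Setting
\[
K(k^2) := R_V(k^2)\, W\, R_U(k^2-\la)\, W ,
\]
equation \eqref{geneigen3} reads $(I-K(k^2))\varphi_\kv=\eigenv$. The condition $0<k^2<\la$ gives $k^2-\la<0$, and since $k^2-\la\neq E_j$ the number $k^2-\la$ lies in the resolvent set of $H_U$, so $R_U(k^2-\la)\in\B(L^2)$ is a bona fide bounded operator; the only boundary--value resolvent is $R_V(k^2)$, whose weighted bounds are supplied by the limiting absorption principle for $H_V$ (legitimate because Assumption \ref{boundst} b) places $0$ outside the point spectrum and the resonances of $H_V$).

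The first task is to show that $K(k^2)$ is \emph{compact} on $\BB$. Here $W\in I_3\cap L^3$ carries $\BB$ into a weighted $L^2$ space, $R_U(k^2-\la)$ preserves $L^2$, and the second factor $W$ feeds $R_V(k^2)$ an input with enough decay that the Agmon weighted estimates return a function in $\BB$ with a net gain of spatial decay; collecting these mappings shows that $K(k^2)$ factors through a Hilbert--Schmidt (hence compact) operator, exactly as for the scalar Lippmann--Schwinger kernel. I expect this to be the main obstacle: one must match the decay exponents of $U$, $V$, $W$ fixed in Assumption \ref{pot} against the weighted resolvent bounds so that all compositions are well defined and the composite kernel is genuinely compact on $\BB$.

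Granting compactness, the Fredholm alternative reduces existence and uniqueness to the injectivity of $I-K(k^2)$ on $\BB$. To obtain injectivity, let $\psi\in\BB$ solve $(I-K(k^2))\psi=0$; applying $R_V(k^2)^{-1}=H_V-k^2$ and using $(H_V-k^2)\eigenv=0$ shows that $\psi$ solves the effective equation $(-\Delta+V-k^2)\psi=W R_U(k^2-\la)W\psi$ and is purely outgoing. A Rellich--type uniqueness argument then forces the $e^{ikx}/x$ coefficient of $\psi$ to vanish, so $\psi$ decays fast enough to lie in $L^2$; setting $\chi:=-R_U(k^2-\la)W\psi\in L^2$ (an $L^2$ function because $k^2-\la<0$) makes $\Psi=(\psi,\chi)^t$ an $L^2$ eigenfunction of $\HH$ at the positive energy $k^2$. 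Since $k^2\notin\EE$ this forces $\psi=0$, whence $I-K(k^2)$ is invertible and $\varphi_\kv$ exists and is unique; its continuity follows because $\eigenv$ is continuous and $K(k^2)\varphi_\kv\in\BB$.

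For the asymptotics I would substitute the large--$x$ behavior of the two terms in $\varphi_\kv=\eigenv+K(k^2)\varphi_\kv$. The generalized eigenfunction of $H_V$ contributes $e^{i\kv\cdot\xv}+\amp\,e^{ikx}/x$, while the standard large--distance asymptotics of the perturbed resolvent kernel, $R_V(k^2)(\xv,\yv)\sim \frac{e^{ikx}}{4\pi x}\,\overline{\phi_{V,\kvp}(\yv)}$ with $\kvp=k\,\xv/x$, converts $K(k^2)\varphi_\kv$ into $\frac{e^{ikx}}{4\pi x}\,\lan \phi_{V,\kvp},\, W R_U(k^2-\la)W\,\varphi_\kv\ran$; the decay of $W$ and the $L^2$ control of $R_U(k^2-\la)W\varphi_\kv$ justify taking the limit inside the integral. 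Adding the two coefficients yields \eqref{eq:varphi-asymptotics} with $\ampe$ as in \eqref{scatamp}.
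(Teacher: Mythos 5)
There is no in--paper proof to measure you against: Proposition \ref{eq:solution-scattering} is one of the results that this paper \emph{recalls} from \cite{ccft} (listed as ``in preparation''), and the present text never proves it. What the paper does contain, in Section 3, are exactly the two external facts your argument turns on: the characterization ``$k^2 \in \EE$ if and only if $(I-M(k))\tilde u = 0$ has a nontrivial solution $\tilde u \in \BB$'' (your injectivity step is precisely the nontrivial direction of this equivalence, your $K(k^2)$ being the paper's $M(k)$), and the asymptotics $\lf(R_V(k^2) w\ri)(\xv) = \braket{\phi_{V,\kvp}}{w}\, e^{ikx}/(4\pi x) + O(x^{-2})$, which is what you use to read off the coefficient in \eqref{scatamp}. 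So your route --- compactness of $K(k^2)$ on $\BB$, Fredholm alternative, exclusion of a homogeneous solution $\psi$ by promoting it to an $L^2$ eigenfunction $(\psi, -R_U(k^2-\la)W\psi)^t$ of $\HH$ at the energy $k^2 \notin \EE$, then kernel asymptotics --- is consistent with the machinery the paper attributes to \cite{ccft}, and the skeleton is sound.

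Two points should be repaired or made explicit. First, you invoke Assumption \ref{boundst} b) to justify the limiting absorption principle for $H_V$ at the positive energy $k^2$; but the proposition is stated under Assumption \ref{pot} alone, and regularity of the \emph{zero}-energy point is irrelevant at positive energies anyway. What the LAP and the Ikebe expansion actually require is that $k^2$ is not an embedded positive eigenvalue of $H_V$, and this follows from the decay and local H\"older hypotheses on $V$ (Kato's theorem on absence of positive eigenvalues for potentials of class $I_4$), not from Assumption \ref{boundst}. Second, your Rellich step silently uses that the nonlocal effective potential $W R_U(k^2-\la) W$ is self--adjoint --- true because $k^2 - \la$ is real and lies in the resolvent set of $H_U$ --- since the vanishing of the outgoing coefficient comes from $\Im \lan f, R_V(k^2+i0) f \ran = 0$ with $f = W R_U(k^2-\la) W \psi$, and that imaginary part vanishes only because $\lan f, \psi \ran = \lan W R_U(k^2-\la) W \psi, \psi \ran$ is real; without this symmetry the flux argument fails. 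Finally, the compactness bookkeeping you flag as the main obstacle is genuinely the bulk of the work, but it is standard: factor $K(k^2) = \lf[ R_V(k^2) W \ri] \circ \lf[ R_U(k^2-\la) W \ri]$, note that $R_U(k^2-\la) W : \BB \to H^2$ is bounded, and that $W : H^2 \to L^2_s$ (with $\xx^s W \in L^2$, $s > 3/2$, as the paper records) is compact by Rellich's theorem, after which the LAP carries $L^2_s$ continuously into $\BB$.
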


	\begin{definition}[Effective scattering length]
		\label{def:scattering-length}
		\mbox{}	\\
		We define the effective scattering length in the open channel as
		\beq
			a_{\mathrm{eff}}(\la) : = \lim_{k \to 0} \ampe,
		\eeq
		where $ \ampe $ is given by \eqref{scatamp}
	\end{definition}
See \cite{ccft} for a motivation of this definition.

	\begin{theorem}[Feshbach resonances]
		\label{main}
		\mbox{}	\\
		Let Assumptions \ref{pot} and \ref{boundst} hold true and let $ \lambda > 0  $ be fixed. Then, there are at least $N$ critical values $\la_j$, $j=0,\ldots,N-1$, with $|E_j| < \la_j$, such that $ \aeff $ is continuous for $ \la \neq \la_j$ and 
		\beq
			\aeff = \frac{c_j}{\la-\la_j} + {\mathcal O}(1),  
		\eeq
		as $\la \to \la_j$, where $c_j \in \erre $.  
		\newline
		Furthermore, there is $\de_0>0$ such that, if $ \lf\| W \ri\|_3 \leq \de_0$,
		then the critical values satisfy $ \la_0 > |E_0| > \la_1 > |E_1 | > \cdots > \la_{N-1} > |E_{N-1}| $
		and any further critical value $\la_j $, with $j\geq N$, is such that $ |E_{N-1}| > \la_j >0 $.
	\end{theorem}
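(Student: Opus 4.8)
\noindent
The plan is to pass to the $k\to 0$ limit, reduce the invertibility question to a self-adjoint compact operator built from $R_U(-\la)$, and then exploit the rank-one structure of the poles of $R_U$ at the bound-state energies $E_j$. First I would take the limit in \eqref{scatamp}. Solving \eqref{geneigen3} as $\varphi_\kv = (1 - R_V(k^2)\,W\,R_U(k^2-\la)\,W)^{-1}\eigenv$ and inserting this into \eqref{scatamp}, the claim is that the limit exists and is continuous in $\la$, yielding
\be
  \aeff = \tx\frac{1}{4\pi}\,\mean{\phi_{V,0}}{W\,R_U(-\la)\,W}{\varphi_0} + A_V(0,0),\qquad \varphi_0 = (1 - T(\la))^{-1}\phi_{V,0},
\ee
with $T(\la):=R_V(0)\,W\,R_U(-\la)\,W$. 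Justifying this limit uniformly in $\la$ is the technical heart of the argument: it rests on the low-energy expansion of $R_V(k^2)$ as a bounded map $L^2_s\to L^2_{-s}$, finite because $0$ is neither eigenvalue nor resonance of $H_V$ (Assumption \ref{boundst}(b)), on the continuity of $\eigenv$ from Proposition \ref{eq:solution-scattering}, and on the mapping properties of $W$ coming from $W\in I_3\cap L^3$. I expect this step to be the main obstacle.

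Since $R_V(0)=H_V^{-1}\geq 0$, the operator $1-T(\la)$ is invertible precisely when $1-\widetilde T(\la)$ is, where $\widetilde T(\la):=R_V^{1/2}(0)\,W\,R_U(-\la)\,W\,R_V^{1/2}(0)$ is self-adjoint and compact (compactness from the Ikebe decay and the weighted-space embeddings). The crucial point is the operator monotonicity
\be
  \partial_\la \widetilde T(\la) = -\,R_V^{1/2}(0)\,W\,R_U(-\la)^2\,W\,R_V^{1/2}(0) \leq 0,
\ee
so every eigenvalue branch of $\widetilde T(\la)$ is nonincreasing in $\la$. Near $\la=|E_j|$ the simple pole $R_U(-\la)=(\la-|E_j|)^{-1}\,\ket{\eta_j}\bra{\eta_j}+\widetilde R(\la)$ produces a branch $\mu_j(\la)$ along the direction $R_V^{1/2}(0)W\eta_j$, which is nonzero by Assumption \ref{boundst}(c), with $\mu_j(\la)\to+\infty$ as $\la\downarrow|E_j|$ and $\mu_j(\la)\to-\infty$ as $\la\uparrow|E_j|$. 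Being continuous and decreasing to the right of $|E_j|$, this branch meets the value $1$ at some $\la_j>|E_j|$ (immediately to the right of $|E_j|$ when $\|W\|_3$ is small); there $1-T(\la_j)$ fails to be invertible, i.e.\ $\la_j$ is a Feshbach resonance. Altogether at least $N$ critical values arise.

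To read off the singularity at $\la_j$ I would insert $R_U(-\la)=(\la-|E_j|)^{-1}\ket{\eta_j}\bra{\eta_j}+\widetilde R(\la)$ into the matrix element and apply the Sherman--Morrison formula to $(1-T)^{-1}$. Writing $\nu:=\la-|E_j|$, $\bra{b}:=\bra{\eta_j}W$, $\ket{\tilde a}:=(1-T_0)^{-1}R_V(0)W\ket{\eta_j}$ with $T_0:=R_V(0)W\widetilde R W$, and $g_j:=\braket{b}{\tilde a}$ (so the resonance sits at $\nu=g_j$, i.e.\ $\la_j=|E_j|+g_j$), one finds that the two a-priori singular contributions of order $1/\nu$ --- one from the explicit pole of $R_U$, one from the pole of $\varphi_0$ --- combine as
\be
  \frac{\gamma\beta_j}{\nu} + \frac{\gamma\beta_j\,g_j}{\nu(\nu-g_j)} = \frac{\gamma\beta_j}{\nu-g_j},\qquad \gamma:=\mean{\phi_{V,0}}{W}{\eta_j},\ \ \beta_j:=\mean{\eta_j}{W(1-T_0)^{-1}}{\phi_{V,0}},
\ee
so the spurious pole at $|E_j|$ cancels and only a simple pole at $\nu=g_j$ survives. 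Hence $\aeff$ is continuous for $\la\neq\la_j$ and $\aeff=c_j/(\la-\la_j)+\OO(1)$, with $c_j=\tfrac{1}{4\pi}\,\beta_j\big(\gamma+\mean{\phi_{V,0}}{W\widetilde R W}{\tilde a}\big)$ evaluated at $\la_j$; since all potentials are real and $\phi_{V,0},\eta_j$ may be chosen real, every factor is real and $c_j\in\erre$.

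Finally, under $\|W\|_3\leq\de_0$ one has $\widetilde T(\la)=\OO(\de_0^2)$ away from the poles, whence $1-T_0$ is invertible, $g_j=\|R_V^{1/2}(0)W\eta_j\|_2^2+\OO(\de_0^4)>0$ is small, and $\la_j=|E_j|+\OO(\de_0^2)$. Since the gaps $|E_{j-1}|-|E_j|$ are fixed and positive, shrinking $\de_0$ forces the interlacing $\la_0>|E_0|>\la_1>\cdots>|E_{N-1}|$, while any further critical value --- arising from the continuous-spectrum part of $R_U(-\la)$ or a non-resonant eigenvalue branch of $\widetilde T$ --- is confined to $(0,|E_{N-1}|)$ by the same smallness.
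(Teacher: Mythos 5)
\noindent
First, a structural point: this paper does not contain a proof of Theorem \ref{main}. The theorem is quoted from \cite{ccft} (listed as ``in preparation''), and Section 3 uses it as an input; so there is no proof of the statement in the present text to compare yours against, and I can only judge your argument on its own merits. On those merits, your strategy is the natural one: pass to the limit $k \to 0$ in \eqref{scatamp} and \eqref{geneigen3}, symmetrize to the compact self-adjoint operator $\widetilde T(\lambda) = R_V^{1/2}(0)\, W R_U(-\lambda) W\, R_V^{1/2}(0)$, use the monotonicity $\partial_\lambda \widetilde T \leq 0$, and resolve the rank-one pole of $R_U(-\lambda)$ at $|E_j|$ by Sherman--Morrison. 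The cancellation you compute is correct: the two $1/\nu$ contributions do combine into $\gamma\beta_j/(\nu - g_j)$, which is precisely what makes $\aeff$ continuous at $\lambda = |E_j|$ and leaves a simple pole at $\lambda_j$; this is also consistent with Section 3 of the paper, where $c_j \neq 0$ is tied to $(\phi_{V,0}, W R_U(-\lambda) W u) \neq 0$.

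\noindent
There are, however, genuine gaps. The most serious one is the count of at least $N$ critical values for \emph{general} $W$: your claim that the branch $\mu_j(\lambda)$, ``being continuous and decreasing to the right of $|E_j|$, meets the value $1$ at some $\lambda_j > |E_j|$'' is unjustified for $j \geq 1$. That branch is only defined on $(|E_j|, |E_{j-1}|)$, and nothing forces it below $1$ before the next pole; if $W$ is not small it may stay above $1$ on the whole interval, and at $|E_{j-1}|$ the eigenvalue branches reorganize. As written, your argument secures only $\lambda_0$ (where one can let $\lambda \to \infty$, so that $\widetilde T(\lambda) \to 0$) together with the small-$\|W\|_3$ regime. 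To get all $N$ values in general you need an extra idea: either an eigenvalue-counting argument, using rank-one interlacing to show that the number of eigenvalues of $\widetilde T(\lambda)$ above $1$ gains one unit each time $\lambda$ crosses a pole $|E_j|$ upward and can only decrease through crossings of the value $1$; or, more cleanly, the observation that for $\lambda \neq |E_j|$ one has $1 \in \sigma(\widetilde T(\lambda))$ if and only if $-\lambda$ is an eigenvalue of $H_{\mathrm{eff}} := H_U - W R_V(0) W$, after which min--max for the sign-definite perturbation $-W R_V(0) W \leq 0$, strict on finite-dimensional subspaces by Assumption \ref{boundst}(c), yields at least $N$ eigenvalues $e_j < E_j$, i.e.\ critical values $\lambda_j = -e_j > |E_j|$.

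\noindent
Two further points. The $k \to 0$ limit establishing the formula for $\aeff$ and its continuity in $\lambda$, which you yourself flag as ``the main obstacle'', is the analytic core of the theorem (low-energy expansion of $R_V(k^2)$ in weighted spaces, continuity of $\varphi_\kv$ and of $(I - M(k))^{-1}$, uniformly on compact sets of $\lambda$); leaving it as a claim means the continuity assertion of the theorem is not actually established. Finally, a smaller but real issue: $g_j$, $\beta_j$ and $T_0$ all depend on $\lambda$, so the pole is located at the solution of the fixed-point equation $\lambda - |E_j| = g_j(\lambda)$, and the residue $c_j$ carries an extra factor $\bigl(1 - g_j'(\lambda_j)\bigr)^{-1}$; the monotonicity $g_j' \leq 0$ guarantees that this zero is simple, but this step should be made explicit rather than treating $g_j$ as a constant.
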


	\begin{corollary}[Zero-energy equation]
 		\label{zerores}
 		\mbox{}	\\
		Under the same assumptions of Theorem \ref{main}, if $\la=\la_j$, then there exists a distributional solution of the zero-energy equation $\HH \Psi = 0$.
	\end{corollary}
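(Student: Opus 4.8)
The plan is to extract the zero-energy solution from the very mechanism that forces the pole of $\aeff$ in Theorem \ref{main}. Fix $\la=\la_j$ and set $k=0$. Writing $\Psi=(\varphi,\xi)^t$, the equation $\HH\Psi=0$ is the system
\beq
\begin{cases}
H_V\,\varphi + W\xi = 0,\\
(H_U+\la_j)\,\xi + W\varphi = 0.
\end{cases}
\eeq
At a critical value the resolvent $R_U(-\la_j)=(H_U+\la_j)^{-1}$ is bounded, i.e. $-\la_j$ avoids the eigenvalues $E_0,\dots,E_{N-1}$ of $H_U$; this is exactly what keeps the occurrence of $R_U$ in formula \eqref{scatamp} regular near $\la_j$, so it holds at a genuine critical value. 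Hence the second line gives $\xi=-R_U(-\la_j)W\varphi$, and substituting into the first reduces everything to the effective open-channel equation
\beq
\label{eq:effective}
\big(H_V - W R_U(-\la_j) W\big)\varphi = 0 .
\eeq
Once a suitable $\varphi$ is produced, $\xi:=-R_U(-\la_j)W\varphi\in L^2$ is determined and $(\varphi,\xi)^t$ solves $\HH\Psi=0$.

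Next I would recast \eqref{eq:effective} in Birman--Schwinger form, as in the derivation of \eqref{geneigen3}. Since $H_V\geq0$ has neither a zero eigenvalue nor a zero-energy resonance (Assumption \ref{boundst}, item b)), $R_V(0)$ and its square root $R_V^{1/2}(0)$ act boundedly between the relevant weighted spaces and $R_V^{1/2}(0)$ is injective there; equation \eqref{eq:effective} is then equivalent to $\varphi=R_V(0)\,W R_U(-\la_j)W\,\varphi$. Symmetrizing, set
\beq
A(\la) := R_V^{1/2}(0)\,W\,R_U(-\la)\,W\,R_V^{1/2}(0),
\eeq
which under Assumption \ref{pot} is a compact self-adjoint operator on $L^2$ depending continuously on $\la$ near $\la_j$. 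If $\psi\in\ker\big(I-A(\la_j)\big)$ then $\varphi:=R_V^{1/2}(0)\psi$ satisfies $\varphi=R_V(0)W R_U(-\la_j)W\varphi$, i.e. solves \eqref{eq:effective}, and $\varphi\neq0$ by injectivity of $R_V^{1/2}(0)$. The key point is that the expansion $\aeff=c_j/(\la-\la_j)+\OO(1)$ of Theorem \ref{main} says precisely that $(I-A(\la))^{-1}$, which governs the $k\to0$ limit in \eqref{scatamp}, has a simple pole at $\la_j$; by the analytic Fredholm theorem this is equivalent to $1\in\sigma\big(A(\la_j)\big)$, that is $\ker\big(I-A(\la_j)\big)\neq\{0\}$. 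Here item c) of Assumption \ref{boundst}, namely $\ker\big(R_V^{1/2}(0)W\big)=\{0\}$, ensures that the closed-channel coupling is non-degenerate, so that this kernel is genuinely produced and carried forward to a nontrivial $\Psi$.

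Collecting these, I would choose $0\neq\psi\in\ker(I-A(\la_j))$, set $\varphi=R_V^{1/2}(0)\psi$ and $\xi=-R_U(-\la_j)W\varphi$, and verify that $\Psi=(\varphi,\xi)^t$ is a nonzero distributional solution of $\HH\Psi=0$: the Ikebe-class decay of $V,U,W$ yields the regularity of $\varphi$ and its $\BB$-type behaviour $\varphi(\xv)\sim c/x$ as $x\to\infty$, so that $\Psi$ is in general a zero-energy resonance rather than an $L^2$ eigenvector, consistent with the weak formulation in the statement. The main obstacle is the rigorous identification ``simple pole of $\aeff$ $\Longleftrightarrow$ nontrivial kernel of $I-A(\la_j)$'': one must track, through the functional-analytic setting of \eqref{geneigen3} (the correct weighted spaces, compactness of $A(\la)$, and the zero-energy limit of the generalized eigenfunctions $\eigenv$), that the residue $c_j$ is nonzero exactly when the kernel is nontrivial, and that the kernel vector indeed pushes forward to a bona fide, nonvanishing solution --- this is where items b) and c) of Assumption \ref{boundst} are essential.
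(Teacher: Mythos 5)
Your construction of the solution is the right one, and it coincides with what the paper does in Section 3: given $u$ in the kernel of $I-R_V(0)\,W R_U(-\la_j) W$ (the space $\MM$), one sets $\xi=-R_U(-\la_j)Wu$ and checks that $\Psi=(u,\xi)^t$ solves $\HH\Psi=0$ distributionally --- this is exactly the paper's $\tilde\Psi=(u,-R_U(-\la)Wu)^t$, and the only difference is cosmetic (you symmetrize the Birman--Schwinger operator to $A(\la)=R_V^{1/2}(0)WR_U(-\la)WR_V^{1/2}(0)$ on $L^2$, the paper works with $M(0)=R_V(0)WR_U(-\la)W$ on $H^2_{-s}$).

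The genuine gap is in the step where you produce the kernel element. You infer $\ker(I-A(\la_j))\neq\{0\}$ from the expansion $\aeff=c_j/(\la-\la_j)+\OO(1)$ of Theorem \ref{main}, reading it as ``$(I-A(\la))^{-1}$ has a simple pole at $\la_j$.'' But that theorem only asserts $c_j\in\erre$; it does \emph{not} assert $c_j\neq0$, and the paper states explicitly, right after the corollary, that $c_j\neq 0$ if and only if one is \emph{not} in the exceptional case of the second kind. In that second-kind case (where the zero-energy solution is an $L^2$ eigenvector, $\MM=\MM_{\EE}\neq\{0\}$), the residue vanishes, $\aeff$ has no pole at $\la_j$, and your analytic-Fredholm inference yields nothing --- yet the corollary still claims a solution exists. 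For the same reason your closing claim that ``the residue $c_j$ is nonzero exactly when the kernel is nontrivial'' is false: the kernel is nontrivial in all exceptional cases, while $c_j\neq 0$ only in the resonant ones. What is actually needed is the characterization ``$\la_j$ is a critical value $\Longleftrightarrow$ $\MM\neq\{0\}$,'' which is the content of the analysis in \cite{ccft} (and is restated, without proof, at the start of the proof of Theorem \ref{ourmain}); it comes from how the critical values are constructed there --- as the points where $I-M(0)$ fails to be invertible --- and cannot be recovered from the displayed expansion of $\aeff$ alone. So your argument covers only the case $c_j\neq0$ and is circular about the very dichotomy the corollary must handle.
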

Clearly the interesting case is when $c_j$ is different from zero. It turns out that this is true if and only if for $\la=\la_j$, we are not in the exceptional case of the second case, according to 
terminology of Definition \ref{usb}. In other words, we have a Feshbach resonance, if and only if $\HH$  presents a zero-energy resonance; see \cite{ccft} for details.

\section{Low Energy Expansion of the Resolvent}
Corollary \ref{zerores} suggests the resolvent $(\HH - k^2)^{-1}$ is singular as $k\to 0 $ when $\la=\la_j$. In this section we give a characterization of the zero energy eigenspace of $\HH$
and we study some properties of the low energy singularities of the resolvent of $\HH$. In what follows we will assume always that $\lambda\neq |E_{j}|$.

First we recall the Schur-Grushin-Feshbach formula: suppose that $X=X_0  \dot{+} X_1$, direct sum of linear spaces, and that we have a linear operator
$L$ on $X$ given by
\beq 
L= \left(
\begin{array}{cc}
L_{00} & L_{01} \\
L_{10} & L_{11}
\end{array}
\right)
\eeq
with $L_{11}$ invertible. Define $C= L_{11} - L_{10}\, L_{00}^{-1} L_{01}$.
Then $L^{-1} $ exists iff $C^{-1}$ exists and 
\beq \label{sgf}
L^{-1}= \left(
\begin{array}{cc}
C^{-1}  &- C^{-1} L_{10}L_{00}^{-1}  \\
  - L_{00}^{-1} L_{01}C^{-1} & L_{00}^{-1}  + L_{00}^{-1} L_{01}C^{-1} L_{10}L_{00}^{-1} 
\end{array}
\right)
\eeq
We can use formula \eqref{sgf} to give a representation of the resolvent of $\HH$.
In this case we put
\be
C(k)= -\Delta + V - k^2 -W \, R_U (k^2-\la) \, W.
\ee 
It is straightforward to see that $C^{-1}(k)\in {\mathcal B}(L^2)$ exists for $\Im k^2 \neq 0$. Let $k^2=\al + i \beta$ then
\begin{multline*}
\| ( -\Delta + V - k^2 -W \, R_U (k^2-\la) \, W)f \|^2 =\\ \|  ( -\Delta + V -\al  -W \, R_U (k^2-\la) \, W)f \|^2 +\beta^2 \|f\| +2\beta^2 \|R_U (k^2-\la) \, W f \|^2.
\end{multline*}
and
\begin{align}\label{res}
&(\HH-k^2)^{-1} = \nonumber \\
&\left(
\begin{array}{cc}
C^{-1}(k) & -C^{-1}(k)\, W \, R_U (k^2-\la) \\
-R_U (k^2-\la) \, W \, C^{-1}(k) &  R_U (k^2-\la) +  R_U (k^2-\la) \, W \, C^{-1}(k) \, W \, R_U (k^2-\la)
\end{array}
\right).
\end{align}
We can write for $\Im k^2 \neq 0$
\begin{align*}
C(k)& =( -\Delta + V - k^2) \left(I  -R_V (k^2)\, W \, R_U (k^2-\la) \, W \right)\\ &=  (I  - W \, R_U (k^2-\la) \, W \, R_V (k^2)) ( -\Delta + V - k^2)
\end{align*}
then
\begin{align*}
C^{-1}(k) &= (I  -R_V (k^2)\, W \, R_U (k^2-\la) \, W )^{-1} R_V (k^2)\\ &=  R_V (k^2)\, (I  - W \, R_U (k^2-\la) \, W \, R_V (k^2))^{-1}.
\end{align*}
For sake of notation we define
\begin{align*}
M(k) & = R_V (k^2)\, W \, R_U (k^2-\la) \, W \\
N(k) & =   W \, R_U (k^2-\la) \, W \, R_V (k^2)
\end{align*}
so that
\begin{align*}
C^{-1}(k) &= (I  -M(k) )^{-1} R_V (k^2)\\ &=  R_V (k^2)\, (I  - N(k))^{-1}.
\end{align*}
Notice that, $\xx^s W \in L^2$ for some $s>3/2$.
Therefore for $0\leq k^2 <\lambda$, 
\beq \label{strangepot}
 W \, R_U (k^2-\la) \, W \in \B_0 ( H^2_{-s}, L^2_{s} ) ,
\eeq
then we have
\beq
M(k) \in \B_0 ( H^2_{-s}, H^2_{-s} ) \qquad N(k) \in \B_0 ( L^2_{s}, L^2_{s} ) \qquad 
1/2 < s  < 3/2.
\eeq
In \cite{ccft}, it was proved that $k^2 \in \EE$ if and only there exists $\tilde u \in \BB$ such that $ (I  -M(k) ) \tilde u = 0 $.
Notice that $ M(k)  \tilde u \in H^2_{-s}$ for some $s>1/2$ and therefore $  \tilde u \in H^2_{-s}$.
Indeed if $\tilde u \in \BB$ then $W \tilde u \in L^2$, $ R_U (k^2-\la) \, W \, \tilde u \in H^2$, $W\, R_U (k^2-\la) \, W \, \tilde u \in L^2_{s}$ for some some $s>1/2$
and finally $R_V (k^2) W\, R_U (k^2-\la) \, W \, \tilde u \in H^2_{-s}$ for some some $s>1/2$.

Then by Fredholm's alternative $(I  -M(k) )^{-1} \in \B (H^2_{-s}, H^2_{-s} ) $  with $s>1/2$ for $0<k^2 <\la$, $k^2 \notin \EE$, $k^2-\la \neq |E_k|$.
Therefore we have for such values of $k^2$ that  $C^{-1}(k) \in \B (L^2_s, H^2_{-s'} ) $ for $k^2> 0$  if $s,\, s' >1/2$  and the boundary value
of the resolvent is well defined as an operator between suitable weighted spaces by \eqref{res}.


We want to discuss the limit of $k^2\to 0$ of \eqref{res}.
We have that  the existence of $C^{-1}(0)$ is related to the existence of $(I- M(0))^{-1}$ or $(I- N(0))^{-1}$.
For this reason we define 
\begin{align}
\MM&=\lf\{ u\in H^2_{-s} \text{ s.t. } (I-M(0))u=0 \ri\} \\
\NN&=\lf\{ u\in L^2_{s} \text{ s.t. } (I-N(0))u=0 \ri\} 
\end{align}
Both $\MM$ and $\NN$ are finite dimensional for $1/2 < s  < 3/2$ due to the compactness properties pointed out.
\begin{proposition}
The sets $\MM$ and $\NN$ are isomorphic as vector spaces and do not depend on $s$. The linear isomorphisms are given by the restrictions of $ W \, R_U (-\la) \, W$ to $ \MM$ and
$R_V(0)$ to $ \NN$ respectively, that is:
\beq \label{isom}
 W \, R_U (-\la) \, W: \MM \to \NN  \qquad 
R_V(0) : \NN \to \MM .
\eeq
The operator $ W \, R_U (-\la) \, W$  can be substituted by $ -\Delta + V $.
\end{proposition}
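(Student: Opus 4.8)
The plan is to recognize the two zero‑energy operators as opposite products of the same pair. Set $A := R_V(0)$ and $B := W\,R_U(-\la)\,W$, so that $M(0) = AB$ and $N(0) = BA$. The mapping properties needed are already available: $B \in \B_0(H^2_{-s}, L^2_s)$ (this is \eqref{strangepot} evaluated at $k=0$) and $A = R_V(0) \in \B(L^2_s, H^2_{-s})$ for $1/2 < s < 3/2$ (the standard zero‑energy resolvent bound, valid because $0$ is neither an eigenvalue nor a resonance of $H_V$ by Assumption \ref{boundst}). With these, the compositions $AB$ on $H^2_{-s}$ and $BA$ on $L^2_s$ are precisely $M(0)$ and $N(0)$, so that $\MM = \ker(I-AB)$ and $\NN = \ker(I-BA)$, and the whole statement reduces to the algebraic fact that $B$ and $A$ interchange these two kernels.

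The core step is the elementary lemma that $B$ restricts to an isomorphism $\MM \to \NN$ with inverse $A|_\NN$. Indeed, for $u \in \MM$ one has $(I-BA)(Bu) = B(I-AB)u = 0$, so $Bu \in \NN$; symmetrically $A$ sends $\NN$ into $\MM$. Moreover $A(Bu) = (AB)u = M(0)u = u$ for $u \in \MM$, and $B(Av) = (BA)v = N(0)v = v$ for $v \in \NN$, so $A|_\NN \circ B|_\MM = \mathrm{id}_\MM$ and $B|_\MM \circ A|_\NN = \mathrm{id}_\NN$. This proves the first two assertions, including the explicit form \eqref{isom} of the maps; in particular $\dim \MM = \dim \NN$.

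To prove independence of $s$, fix $1/2 < s_1 < s_2 < 3/2$ and write $\MM^{(s)}, \NN^{(s)}$ for the spaces computed in $H^2_{-s}, L^2_s$. The continuous inclusions $H^2_{-s_1} \hookrightarrow H^2_{-s_2}$ and $L^2_{s_2} \hookrightarrow L^2_{s_1}$ give $\MM^{(s_1)} \subseteq \MM^{(s_2)}$ and $\NN^{(s_2)} \subseteq \NN^{(s_1)}$. Since all four spaces are finite dimensional and $\dim \MM^{(s)} = \dim \NN^{(s)}$ by the isomorphism applied at each fixed $s$, one obtains $\dim \MM^{(s_1)} \le \dim \MM^{(s_2)} = \dim \NN^{(s_2)} \le \dim \NN^{(s_1)} = \dim \MM^{(s_1)}$. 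Equality therefore holds throughout, the two inclusions collapse to identities, and $\MM$, $\NN$ are independent of $s$ in the admissible range.

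Finally, the substitution of $W\,R_U(-\la)\,W$ by $-\Delta+V$ follows from the factorization: for $u \in \MM$ we have $u = M(0)u = R_V(0)\,(Bu)$ with $Bu \in L^2_s \subset L^2$, so applying $-\Delta+V = R_V(0)^{-1}$ gives $(-\Delta+V)u = Bu = W\,R_U(-\la)\,W\,u$. Hence $-\Delta+V$ and $W\,R_U(-\la)\,W$ coincide on $\MM$ and may be interchanged in \eqref{isom}. I expect the only genuinely non‑formal points to be the two operator‑theoretic inputs, namely the boundedness $R_V(0)\in\B(L^2_s,H^2_{-s})$ and the compactness forcing $\dim\MM,\dim\NN<\infty$, both supplied by the resolvent theory recalled above; once these are in place the isomorphism is the purely algebraic identity $\ker(I-AB)\cong\ker(I-BA)$ and the remainder is bookkeeping with the weighted spaces.
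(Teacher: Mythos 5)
Your proof is correct and follows essentially the same route as the paper: both use $W\,R_U(-\la)\,W$ and $R_V(0)$ as mutually inverse maps between the kernels (the abstract fact $\ker(I-AB)\cong\ker(I-BA)$), the same opposite-monotonicity-in-$s$ argument for independence of $s$, and the same observation that $(-\Delta+V)u = W\,R_U(-\la)\,W\,u$ on $\MM$. Your version merely packages the injectivity/surjectivity steps into the two identities $A(Bu)=u$ and $B(Av)=v$, and spells out the dimension count that the paper leaves implicit.
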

\begin{proof}
Let $u\in \MM$ then  $ W \, R_U (-\la) \, W \, u \in \NN$ , indeed
\be 
0=  W \, R_U (-\la) \, W (I-M(0))\, u =  (I-N(0)) \,  W \, R_U (-\la) \, W \, u.
\ee
The map is injective: if $ W \, R_U (-\la) \, W \, u=0$ then we have $u =R_V (0)\, W \, R_U (-\la) \, W \, u=0$. 
Moreover it is clear that $R_V (0)$ is the the inverse of  $ W \, R_U (-\la) \, W$ on the image of $\MM$.
We prove that $ W \, R_U (-\la) \, W $ is also onto. First notice that  $R_V (0)$ on $\NN$ is injective since $(-\Delta + V) R_V (0)= I$ on $\NN$. Then if 
$v\in \NN$ we have $v= W \, R_U (-\la) \, W \,  R_V (0)\, v$  and it is sufficient to prove that $ R_V (0)\, v\in \MM$.
This is straightforward since 
\begin{multline*}
( I -  R_V (0)\,W \, R_U (-\la) \, W ) \,  R_V (0)\, v =\\ 
=  R_V (0)\, ( I -  \,W \, R_U (-\la) \, W  \,  R_V (0) )\, v =0.
\end{multline*}
This proves that $\MM$ and $\NN$ are isomorphic and \eqref{isom}. Since the two spaces have opposite monotony in $s$, they are in facts independent. Notice that
$ W \, R_U (-\la) \, W$ and $ -\Delta + V$ coincides  that on $\MM$ by the definition of $\MM$.
\end{proof}
\begin{proposition} \label{kernel}
We have $C(0) \, \MM =0$ and $ \text{Ker }C(0)=\MM$ in $H^2_{-s}$ for $1/2 < s  < 3/2$.
\end{proposition}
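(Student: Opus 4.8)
The plan is to read off both inclusions directly from the factorization of $C(k)$ at $k=0$ established just above the statement. Since $H_V \geq 0$ has neither a zero eigenvalue nor a zero resonance (Assumption \ref{boundst}(b)), $R_V(0)=(-\Delta+V)^{-1}$ and $(-\Delta+V)R_V(0)=I$, so at $k=0$ the first of the two factorizations of $C$ reads
\be
C(0) = (-\Delta + V)\lf( I - M(0) \ri),
\ee
with $M(0)=R_V(0)\,W\,R_U(-\la)\,W$. From this the inclusion $\MM \subseteq \Ker C(0)$ is immediate: for $u \in \MM$ one has $(I-M(0))u=0$ by the definition of $\MM$, hence $C(0)u=(-\Delta+V)\cdot 0 = 0$. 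In particular $C(0)\,\MM=0$, which is the first assertion.

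For the opposite inclusion I would take $u \in H^2_{-s}$ with $C(0)u=0$ and set $w:=(I-M(0))u$. Because $M(0)\in\B_0(H^2_{-s},H^2_{-s})$, the element $w$ again lies in $H^2_{-s}$, and the factorization gives $(-\Delta+V)w = C(0)u = 0$, i.e. $w\in H^2_{-s}$ is a solution of $H_V w=0$. The decisive observation is that, for $1/2<s<3/2$, the hypothesis that zero is neither an eigenvalue nor a resonance of $H_V$ is precisely the statement that $-\Delta+V$ is injective on $H^2_{-s}$: a nonzero solution lying in $L^2$ would be a zero-energy eigenfunction, while one lying in $H^2_{-s}\setminus L^2$ (admissible for $s$ slightly above $1/2$) would be a zero-energy resonance, and both are excluded. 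Hence $w=0$, that is $(I-M(0))u=0$, so $u\in\MM$. Together with the first part this yields $\Ker C(0)=\MM$ in $H^2_{-s}$, while independence of $s$ in the stated range follows from the preceding proposition.

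The step I expect to require the most care is this last identification of the no-eigenvalue/no-resonance assumption with injectivity of $-\Delta+V$ on the weighted space $H^2_{-s}$, together with checking that the mapping properties genuinely place $w$ in $H^2_{-s}$ and allow the distributional equation $(-\Delta+V)w=0$ to be read off from $C(0)u=0$. Everything else is a formal consequence of the factorization and of the definition of $\MM$, so no additional estimates beyond those already recorded are needed.
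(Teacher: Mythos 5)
Your proof is correct and follows essentially the same route as the paper's: the inclusion $\MM \subseteq \Ker C(0)$ is read off from the factorization $C(0) = (-\Delta+V)(I-M(0))$, and the reverse inclusion rests on the mapping property \eqref{strangepot} together with Assumption \ref{boundst}(b) that zero is neither an eigenvalue nor a resonance of $H_V$. The only cosmetic difference is that the paper applies $R_V(0)$ to $(-\Delta+V)u = W\,R_U(-\la)\,W u$ to conclude $u = M(0)u$ directly, whereas you invoke the equivalent fact that $-\Delta+V$ is injective on $H^2_{-s}$ for $1/2<s<3/2$; both steps are the same use of the zero-energy regularity assumption.
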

\begin{proof}
Since $\MM \subset H^2_{-s}$ then for $u\in \MM $ we have 
\be
C(0) u = (-\Delta + V) \, (I-M(0) )\, u = 0
\ee
Suppose that $u\in H^2_{-s}$ and $C(0) u =0$. Then $(-\Delta + V) u =  W \, R_U (-\la) \, W \, u \in L^2_s$ by \eqref{strangepot} and $u\in \DD (H_V)$. Hence 
$u= R_V(0) \, (-\Delta + V) u =  R_V(0) \, W \, R_U (-\la) \, W \, u$.
\end{proof}
\begin{proposition}
For $1/2 < s  < 3/2$ there exists operators  $Q,\, K \in \B(H_{-s}^2)$  such that
\begin{align}
& Q^2=Q, \quad QK=KQ=0 \label{propr1} \\
& Q(I-M(0))=(I-M(0))Q=0 \label{propr1.5} \\
& K(I-M(0))=(I-M(0))K=I-Q \label{propr2} \\
& Q \text{ is of finite rank and } K-I \in \BB_0(H_{-s}^2) \label{propr3} \\
&  W \, R_U (-\la) \, W \, K = K\,  W \, R_U (-\la) \, W \quad R_V(0) \, K = K \, R_V(0) \label{propr4}
\end{align}
\end{proposition}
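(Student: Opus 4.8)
The plan is to recognise $(Q,K)$ as the Riesz (spectral) projection and the reduced resolvent of the compact operator $M(0)$ at its eigenvalue $1$; the identities (\ref{propr1})--(\ref{propr3}) are then the standard algebraic relations satisfied by such a pair, while (\ref{propr4}) encodes the intertwining between the $M$- and $N$-pictures. The only genuinely analytic input is that the eigenvalue $1$ of $M(0)$ is \emph{semisimple} (ascent one): this is exactly what forces the range of the spectral projection to coincide with $\MM=\ker(I-M(0))$ rather than with a larger generalized eigenspace, and hence what makes $Q(I-M(0))=0$ compatible with $\mathrm{Ran}\,Q=\MM$.

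First I would establish semisimplicity by symmetrising, following the standard low-energy resolvent framework. Writing $B:=W\,R_U(-\la)\,W$ and $G:=R_V^{1/2}(0)$, Assumption \ref{boundst} gives that $R_V(0)=G^{2}$ is positive (no zero eigenvalue or resonance), and since $-\la$ lies in the resolvent set of $H_U$ the operator $B$ is self-adjoint. Thus $M(0)=R_V(0)B=G(GBG)G^{-1}$ is similar to the \emph{self-adjoint} compact operator $\tilde M:=GBG$, and likewise $N(0)=BR_V(0)=G^{-1}\tilde M G$. Self-adjoint compact operators have $\ker(I-\tilde M)^{2}=\ker(I-\tilde M)$. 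To transfer this, take $u\in H^2_{-s}$ with $(I-M(0))^{2}u=0$ and set $v:=(I-M(0))u\in\MM$; using $v=R_V(0)Bv=G(GBv)$ and $u=R_V(0)B(u+v)=G\big(GB(u+v)\big)$ one checks that $u,v\in\mathrm{Ran}\,G$ (here $Bu,Bv\in L^2_s$ by \eqref{strangepot}), so $\psi:=G^{-1}u$ and $\chi:=G^{-1}v$ lie in $L^2$ and satisfy $(I-\tilde M)\psi=\chi\in\ker(I-\tilde M)$. Semisimplicity of $\tilde M$ gives $\chi=0$, whence $v=G\chi=0$; therefore $\ker(I-M(0))^{2}=\MM$.

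With semisimplicity in hand, $1$ is an isolated, finite-multiplicity, simple pole of the resolvent of $M(0)$, and I would define, for a small circle $\ga=\{|z-1|=\eps\}$ enclosing no other spectrum,
\beq
Q:=\frac{1}{2\pi i}\oint_{\ga}(z-M(0))^{-1}\,\diff z,\qquad
K:=\frac{1}{2\pi i}\oint_{\ga}\frac{(z-M(0))^{-1}}{z-1}\,\diff z .
\eeq
Then $Q^{2}=Q$, $M(0)Q=QM(0)$, $QK=KQ=0$ and $K(I-M(0))=(I-M(0))K=I-Q$ are the classical identities for the regular part of a simple-pole resolvent; semisimplicity gives $\mathrm{Ran}\,Q=\MM$, hence $M(0)Q=Q$ and (\ref{propr1.5}), while finite rank of $Q$ is the finite multiplicity of the eigenvalue. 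Rewriting the resolvent identity as $K-I=KM(0)-Q$ exhibits $K-I$ as a compact operator (a bounded operator composed with the compact $M(0)$, plus a finite-rank term), which yields (\ref{propr1})--(\ref{propr3}).

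Finally, for (\ref{propr4}) I would use the intertwining relations $R_V(0)\big(I-N(0)\big)=\big(I-M(0)\big)R_V(0)$ and $B\big(I-M(0)\big)=\big(I-N(0)\big)B$, immediate from $M(0)=R_V(0)B$ and $N(0)=BR_V(0)$. They give $(z-M(0))^{-1}R_V(0)=R_V(0)(z-N(0))^{-1}$ and $B(z-M(0))^{-1}=(z-N(0))^{-1}B$ on the respective weighted spaces, and integrating over $\ga$ propagates them to the spectral projections and reduced resolvents built above; identifying $K$ (resp.\ $Q$) with the reduced resolvent on $H^2_{-s}$ or on $L^2_s$ according to the space on which it must act — the two being matched through the isomorphism $\MM\cong\NN$ established earlier — these intertwinings \emph{are} the commutation relations (\ref{propr4}). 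The main obstacle I anticipate is precisely this bookkeeping across $H^2_{-s}$ and $L^2_s$: one must verify that every composition appearing in (\ref{propr4}) is a bounded map between the correct weighted spaces and that the contour manipulations remain legitimate there, but no idea beyond the symmetrisation used for semisimplicity is needed.
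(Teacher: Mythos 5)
Your construction is essentially the paper's own proof: the paper likewise defines $Q$ as the Riesz spectral projection of the compact operator $M(0)$ at the eigenvalue $1$ and sets $K=(I-M(0)+Q)^{-1}(I-Q)$ (which coincides with your contour-integral reduced resolvent), obtaining \eqref{propr1}--\eqref{propr3} from the separation of the spectrum and compactness of $M(0)$, and \eqref{propr4} by exactly the $M(0)=R_V(0)\,W R_U(-\la)\,W$ versus $N(0)=W R_U(-\la)\,W\,R_V(0)$ intertwining you describe (the paper cites Lemma 3.5 of Jensen--Kato for this). The only point you treat explicitly that the paper delegates to its references is the semisimplicity of the eigenvalue $1$ of $M(0)$ --- the ingredient without which \eqref{propr1.5} would fail, since the Riesz projection would then see a nontrivial nilpotent part --- which you correctly obtain by symmetrization with $R_V^{1/2}(0)$, in the spirit of Assumption \ref{boundst}(c).
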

\begin{proof}
Define $Q$ as a spectral projection by the analytic functional calculus,
\beq
Q= -\frac{1}{2\pi i } \int_{|z-1|=\de}  \dfrac{dz}{(M(0)-z)},
\eeq
with $\de$ sufficiently small that $\{|z-1|=\de\}$ only includes the eigenvalue.
This is possible due to the compactness of $M(0)$.
Then $(I-M(0) +Q)$ is invertible and we can define 
\beq
K= (I-M(0) +Q)^{-1} (I-Q).
\eeq
Properties \eqref{propr1}, \eqref{propr1.5} and \eqref{propr2} follow from the separation of spectrum, see \cite{K} pg. 178.
The operator $Q$ is finite rank since $M(0)$ si compact. Taking into account the identity $ (I+B)^{-1} = I - B(I+B)^{-1}$ , we have that
$K-I$ has the same regularity of $M(0)$ and therefore it is compact. The last property \eqref{propr4} can be proved as in Lemma 3.5 in \cite{KJ}.
\end{proof}
Due to Proposition \ref{kernel} and the positivity of $-\Delta +V$ by Assumption \ref{boundst},  we have that $( W \, R_U (-\la) \, W u,  u)$ defines a inner product on $\MM$.
At the same time  we have that $( R_V(0) v , v)$ defines an inner product on $\NN$. Notice that $\MM$ and $\NN$ possess a natural duality induced by $L^2$ inner product
since $H^2_{-s} \subset L^2_{-s} = (L^2_s)^*$. Under this coupling, if $\{ u_j\}$, $j=1,\ldots , d$,  is an orthonormal basis in $\MM$ then $\{v_j\}$ with 
$v_j= W \, R_U (-\la) \, W u_j $ is  orthonormal in $\NN$  and it is also the dual basis with
respect to the above defined inner products. The two bases are orthonormal w.r.t. the two above defined inner products.
Moreover the spectral projector $Q$ reads
\beq
Q= \sum_{j=1}^d  |u_j\ran \lan v_j| .
\eeq
Let us call $\MM_{\EE}= \{ u \in \MM \text{ s.t. } u \in H^2\}$. It is straightforward to check that if $u\in \MM_{\EE}$ then $\tilde \Psi= (u, -R_U (-\la) \, W \, u)^t $ is a zero-energy eigenvalue of $\HH$
that is $\HH \tilde \Psi=0$ and $\tilde \Psi \in \HH$.
We can have different cases
\begin{definition} \label{usb}
We say that we are in the generic case if $\MM = \{0\}$ and in the exceptional case otherwise. In the exceptional case we distinguish the following
situations: in the first kind $\MM\neq \{0 \} $ and $\MM_{\EE} = \{ 0 \}$, in the second case $\MM=\MM_{\EE}\neq \{0 \} $ and in the third case ${0}\subsetneq \MM_{\EE} \subsetneq \MM$.
\end{definition}
\noindent
Notice, see \cite{ccft}, that 
\beq
	\lf(R_V(k^2) w\ri)(\xv) \underset{x \to + \infty}{=}  \braket{\phi_{V,\kvp}}{w} \frac{e^{i k x}}{4\pi x} + O(x^{-2}).
\eeq
Then $u\in \MM$ belongs to $\MM_{\EE}$ iff
\be
 \braket{\phi_{V,0}}{ W \, R_U (-\la) \, W u} = 0
\ee
and in particular $\text{dim } \MM / \MM_{\EE} \leq 1$ that is, there is at most one resonance.
In the main theorem we discuss the expansion of $C^{-1}(k)$.
\begin{theorem}[Low Energy Expansion] \label{ourmain}
\mbox{}	\\
\noindent
Let Assumption \ref{pot} and  Assumption \ref{boundst} hold and let $\{ \la_j \}$ be the critical values as in Theorem \ref{main}.
Then if $\la \notin \{ \la_j \}$ then we are in the generic case and $C^{-1}(0)$ exists. 
Otherwise  if $\la \in \{ \la_j \}$ we are in exceptional case and we have the following expansions :
\begin{itemize}
\item In the exceptional case of first kind, we have:
\beq \label{first}
C^{-1} (k) =  \frac{1}{ak} |u \ran \lan u| + O(1) 
\eeq
in $ \B ( L^2_{s}, H^2_{-s'} ) $ with $s,\, s' >1/2$ and $s+s'>2$
\noindent where 
$$a=  -( W \, R_U (-\la) \, W u ,  R_V ' (0)\,W \, R_U (-\la) \, W u )\neq0.$$
\item In the exceptional case of second kind assume $W\in (I)_5$, then we have:
\beq \label{second}
C^{-1} (k) =   \frac{1}{k^2} P_0     -  \frac{1}{k} P_0  W \, R_U (-\la) \, W  T_3 P_0    +O(1) .
\eeq
in $ \B ( L^2_{s}, H^2_{-s'} ) $ with $s>1/2$, $s'>7/2$,
where $P_0$ is the orthogonal projector on the zero-eigenspace of $\HH$ and $T_3$ is defined by \eqref{term3}. 
\end{itemize}
\end{theorem}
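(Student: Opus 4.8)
The plan is to read off everything from the factorisation $C^{-1}(k)=(I-M(k))^{-1}R_V(k^2)$ derived above. By Assumption~\ref{boundst}(b) zero is neither an eigenvalue nor a resonance of $H_V$, so $R_V(k^2)$ has a \emph{regular} expansion \[ R_V(k^2)=R_V(0)+k\,R_V'(0)+k^2G_2+\cdots \] in $\B(L^2_s,H^2_{-s'})$ for suitable weights, while $R_U(k^2-\la)=R_U(-\la)+k^2R_U(-\la)^2+\cdots$ is analytic in $k^2$ because $\la\neq|E_j|$. Hence all the singular behaviour of $C^{-1}(k)$ is concentrated in $(I-M(k))^{-1}$, and I would first record \[ M(k)=M(0)+k\,M_1+k^2M_2+\cdots,\qquad M_1=R_V'(0)\,W\,R_U(-\la)\,W, \] observing that odd powers of $k$ enter only through $R_V$. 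Each higher coefficient carries a more singular, polynomially growing kernel (inherited from the $|x-y|^{\,n-1}$ growth of the $n$-th term of $R_V$), so controlling it in $\B(L^2_s,H^2_{-s'})$ costs progressively more weight; this is what ultimately pins down the admissible ranges $s+s'>2$ in the first case and $s'>7/2$, $W\in I_5$, in the second.

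The second step is a Feshbach--Grushin reduction of $A(k):=I-M(k)=(I-M(0))-(M(k)-M(0))$ using the projection $Q$ and the reduced resolvent $K$ of the preceding proposition (properties \eqref{propr1}--\eqref{propr4}). Splitting $I=Q+(I-Q)$ and taking the Schur complement on $\MM=\Ran Q$, one finds that $A(k)$ is invertible for small $k\neq0$ iff the finite-rank effective operator \[ E(k)=QA(k)Q-QA(k)(I-Q)\big[(I-Q)A(k)(I-Q)\big]^{-1}(I-Q)A(k)Q \] is invertible on $\MM$, and then $A(k)^{-1}$ equals a part bounded as $k\to0$ plus $(\text{bounded})\,E(k)^{-1}\,(\text{bounded})$; thus the pole of $C^{-1}(k)$ is exactly that of $E(k)^{-1}$. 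Since $A(0)Q=0$ one gets $E(k)=-k\,QM_1Q+O(k^2)$, and using the duality $\MM\cong\NN$ of \eqref{isom}, the rank-one structure of $R_V'(0)$ (its kernel is a multiple of $\phi_{V,0}(\xv)\,\overline{\phi_{V,0}(\yv)}$), and the criterion recalled before Theorem~\ref{ourmain}, namely that $u\in\MM_{\EE}$ iff $\lan\phi_{V,0},W R_U(-\la)W u\ran=0$, the matrix $QM_1Q$ is evaluated in terms of the scalar $a$.

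In the exceptional case of the first kind $\dim\MM=1$; write $\MM=\mathrm{span}\{u\}$, $v=W R_U(-\la)W u$, so $Q=|u\ran\lan v|$. Because $\MM_{\EE}=\{0\}$ the resonance condition $\lan\phi_{V,0},v\ran\neq0$ holds, giving $QM_1Q=-a\,Q$ with $a=-(v,R_V'(0)v)\neq0$ and hence $E(k)=ak\,Q+O(k^2)$. Inverting, $E(k)^{-1}=\tfrac{1}{ak}Q+O(1)$; inserting this into $A(k)^{-1}$ and multiplying by $R_V(k^2)=R_V(0)+O(k)$ collapses the rank-one residue, via $R_V(0)\colon\NN\to\MM$, to the symmetric form $\tfrac{1}{ak}|u\ran\lan u|$, which is \eqref{first} in $\B(L^2_s,H^2_{-s'})$ with $s+s'>2$.

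The exceptional case of the second kind is the main obstacle. Here $\MM=\MM_{\EE}$, so $\lan\phi_{V,0},v_j\ran=0$ for every basis vector and the leading reduced term vanishes, $QM_1Q=0$: the degeneracy is genuinely of order $k^2$. One must therefore push the Schur reduction one order further, where the $k^2$ contributions of $QM_2Q$ and of the cross term combine into \[ E(k)=k^2D_0+k^3D_1+O(k^4),\qquad D_0=-Q\big(M_2+M_1KM_1\big)Q. \] The hard points are (i) proving that $D_0$ is invertible on $\MM$, so that $E(k)^{-1}=\tfrac{1}{k^2}D_0^{-1}-\tfrac1k D_0^{-1}D_1D_0^{-1}+O(1)$ yields \emph{both} singular orders — this is what forces the expansion of $M(k)$ up to $k^3$ and hence the hypotheses $W\in I_5$ and $s'>7/2$ — and (ii) identifying the two residues. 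For the latter I would use the explicit isomorphism $u\mapsto\tilde\Psi=(u,-R_U(-\la)Wu)^t$ between $\MM_{\EE}$ and the zero-energy eigenspace of $\HH$, together with the inner products built from the $\MM$--$\NN$ duality, to check that after assembling with $R_V(k^2)$ the $k^{-2}$ coefficient collapses exactly to the orthogonal projector $P_0$ and the $k^{-1}$ coefficient to $-P_0 W R_U(-\la)W\,T_3P_0$, with $T_3$ the third-order reduction operator of \eqref{term3}. Matching the Grushin data to the \emph{orthogonal} projection $P_0$ (a normalisation check) and to $T_3$ is where the genuine work lies; everything else is bounded bookkeeping in the weighted spaces.
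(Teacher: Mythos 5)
Your overall strategy is exactly the paper's: the factorisation $C^{-1}(k)=(I-M(k))^{-1}R_V(k^2)$, the expansion of $M(k)$ inherited from the Jensen--Kato expansion of $R_V(k^2)$ together with the analyticity of $R_U(k^2-\la)$, and the Schur--Grushin--Feshbach reduction built on the spectral pair $Q$, $K$. Your treatment of the generic case and of the exceptional case of the first kind coincides with the paper's proof and is complete: rank-one $Q=|u\ran\lan v|$ with $v=W R_U(-\la)W u$, the Schur complement $E(k)=ak\,Q+O(k^2)$ with $a=-(v,R_V'(0)v)=\frac{1}{4\pi}|(\phi_{V,0},v)|^2\neq 0$ precisely because $\MM_{\EE}=\{0\}$, and the collapse of $Q R_V(0)$ to $|u\ran\lan u|$ using $R_V(0)v=u$.

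In the exceptional case of the second kind, however, your proposal stops exactly where the paper's actual work begins, and this is a genuine gap: you name as ``hard points'' (i) the invertibility of $D_0$ on $\MM$ and (ii) the identification of the two residues with $P_0$ and $-P_0WR_U(-\la)W T_3P_0$, but you give no argument for (i) and only a direction for (ii). The paper settles both with one explicit computation. First, the cancellation condition $(\phi_{V,0},v_j)=0$ for all $j$ (valid since $\MM=\MM_{\EE}$), combined with the rank-one structure of $R_V'(0)$ coming from \eqref{deri}, gives $QM_1=M_1Q=0$ identically; hence the cross term $m_{10}\,m_{00}^{-1}\,m_{01}$ is $O(k^4)$, your correction $QM_1KM_1Q$ vanishes, and $D_0$ reduces to $-QM_2Q$. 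Second, by \eqref{deri}, Lemma 2.6 of \cite{KJ} and the same cancellation, the matrix elements of $QM_2Q$ are computed in closed form: $(v_j,R_V''(0)v_k)=2(u_j,u_k)$ and $(v_j,R_V(0)WR_U^2(-\la)Wu_k)=(u_j,WR_U^2(-\la)Wu_k)$, so that the matrix of $D_0$ in the dual bases $\{u_j\}$, $\{v_j\}$ is $A_{jk}=(u_j,u_k)+(R_U(-\la)Wu_j,R_U(-\la)Wu_k)$. This is precisely the Gram matrix in ${\mathscr H}=L^2\oplus L^2$ of the zero-energy eigenvectors $\tilde\Psi_j=(u_j,-R_U(-\la)Wu_j)^t$ of $\HH$, hence positive definite, which settles (i); and setting $B=A^{-1/2}$, $\tilde u_k=\sum_j B_{kj}u_j$ turns the inverse of the leading block into $P_0WR_U(-\la)W$, so that, using $P_0WR_U(-\la)W R_V(0)=P_0$, composition with $R_V(k^2)$ produces exactly the residues claimed in \eqref{second}, which settles (ii). This computation is not ``bounded bookkeeping'': it is also what dictates the hypotheses $W\in I_5$ and $s'>7/2$ (needed so that the $k^2$ and $k^3$ coefficients of $M(k)$ exist in the weighted spaces), and without it the fact that the leading residue is the \emph{orthogonal} projector onto the zero-energy eigenspace of $\HH$ remains unproven.
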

\noindent
We omit the discussion of the exceptional case of the third kind for sake of brevity.
\begin{proof}
The first part of the theorem is just a rephrasing of Theorem \ref{main}:  the critical values $\{\la_j \}$ are exactly the values of $\la$ such $\MM \neq \{0\}$.
Therefore $(I-M(0))^{-1}$ exists by Fredholm's alternative and $C^{-1}(0) = (I-M(0))^{-1} \, R_V(0) \in  \B_0 ( L^2_{s}, H^2_{-s'} ) $ with $s,\, s' >1/2$ and $s+s'>2$.

\noindent
Now we discuss the exceptional case of the first kind.
Let us define $\overline Q= I-Q$ and decompose $M(k)$ accordingly, that is, we put
\beq
M(k)= \lf(
\begin{array}{cc}
\Qb (I-M(k)) \Qb & \Qb(I-M(k)) Q \\
Q (I-M(k)) \Qb & Q(I-M(k)) Q 
\end{array}
\ri) =
\lf(
\begin{array}{cc}
m_{00}(k) & m_{01}(k)\\
m_{10}(k) & m_{11}(k)
\end{array}
\ri)
\eeq
Remember that in this case $Q= | u \ran \lan W \, R_U (-\la) \, W u |$ with $u\in \MM$; we normalize $u$ requiring that $( u,  W \, R_U (-\la) \, W u )=1$.
Notice that $m_{00} (k)$ is  continuous and $m_{00}^{-1}(0)$ exist by construction. Then $m_{00}^{-1}(k)$ exists and it is continuous for $k$ sufficiently small
by Neumann series. By \eqref{sgf} we have to discuss the invertibility of 
\beq 
m(k)   = m_{11}(k) - m_{10 }(k)  m_{00}^{-1}(k)  m_{01}(k) 
\eeq 
on the range of $Q$.
In facts in this case we have
\begin{multline*}
m(k) = Q \lf[ ( W \, R_U (-\la) \, W u , (I-M(k)) u )+ \ri.\\ -\lf.  ( W \, R_U (-\la) \, W u ,   (I-M(k))  \Qb  m_{00}^{-1}(k)  \Qb (I-M(k)) u ) \ri]
\end{multline*}
The resolvent $R_V (k^2)$ has the expansion in $\B ( L^2_{s}, H^2_{-s'} ) $ with $s,\, s' >1/2$ and $s+s'>2$,  see Lemma 2.3 in\cite{KJ} and Lemma 2.3 in \cite{Y}.
\[
R_V (k^{2}) = R_V (0) + R_v ' (0) \, k + o(k)
\]
where $'$ denote the derivative w.r.t. $k$. Notice also the analiticity of $R_U( k^2-\la)$ in $k^2$ in   $\B ( L^2, H^2 ) $.
Then the following expansion  in  $\B_0 ( H^2_{-s}, H^2_{-s} ) $, $1/2<s<3/2$, holds true: 
\beq \label{exp}
M(k) =  R_V (0)\,W \, R_U (-\la) \, W + k  R_V ' (0)\,W \, R_U (-\la) \, W +   o(k).
\eeq
Then we have
\begin{align*}
& ( W \, R_U (-\la) \, W u , (I-M(k)) u ) = 1 - ( W \, R_U (-\la) \, W u , M(0) u ) \\
 & - k ( W \, R_U (-\la) \, W u ,  R_V ' (0)\,W \, R_U (-\la) \, W u ) +o(k) \\
&= a\, k + o(k),
\end{align*}
where we have put
\[
a=  -( W \, R_U (-\la) \, W u ,  R_V ' (0)\,W \, R_U (-\la) \, W u )
\]
Since 
\beq \label{deri}
R_V ' (k^2) = (I- R_V (k^2) V) R_0 ' (k^2)  (I-V R_V (k^2) )
\eeq
then we have
\begin{align*}
a&=\frac{1}{4\pi}| (1,  (I-V R_V (0) ) W \, R_U (-\la) \, W u ) |^2 \\
  &= \frac{1}{4 \pi}| (\phi_{V,0} ,W \, R_U (-\la) \, W u ) |^2 \neq 0
\end{align*}
The constant $a$ is different from 0 otherwise $u$ would be a 0-energy eigenstate and we would be in the exceptional case of the second kind. 
Using again expansion \eqref{exp}, we have also $(I-M(k)) u= O(k)$
and  $ ( W \, R_U (-\la) \, W u ,   (I-M(k)) f) =  ((I-N(k)) W \, R_U (-\la) \, W u ,    f) = O(k)$. Therefore we obtain
\[
m(k)= a k Q+  o(k) \qquad   m^{-1} (k) = \frac{1}{ak} Q +o(1)
\]
Using \eqref{sgf} and the above remarks, we see that the only singular terms comes from the term $d^{-1}$ and we obtain 
\[
(I-M(k) )^{-1} = \frac{1}{ak} Q +O(1) \quad \text{ in }  \B_0 ( H^2_{-s}, H^2_{-s} ) \quad 1/2<s<3/2
\] 
Then 
\beq
C^{-1} (k) =  \frac{1}{ak} |u \ran \lan u| + O(1) \quad \text{ in }  \B_0 ( L^2_{s},  H^2_{-s'} ) \quad s,\, s' >1/2 \quad s+s'>2
\eeq
and this proves \eqref{first}.

\noindent
Now we consider the exceptional case of the second case. Again the main point is the inversion of $m(k)$ on the range of $Q$.
In this case  we have
\beq \label{canc}
\begin{split}
&Q= \sum_{j=1}^d |u_j\ran \lan W \, R_U (-\la) \, W \, u_j| \\
&(1, (I-V R_V(0)) \, W \, R_U (-\la) \, W \, u_j) = 0 \;\;\;j=0,\ldots,d
\end{split}
\eeq
Let us start expanding around $k=0$
\beq \label{m11}
m_{11}(k)= Q (I-M(k))Q= \sum_{j,k=1}^d |u_j\ran \lan v_k| (v_j, (I-M(k)) u_k)
\eeq
Taking into account $W\in (I)_5$, in the following of the proof we can choose $s>7/2$ such that $\xx^s W \in L^2$.
Then the following expansion holds true in $\B_0 ( H^2_{-s}, H^2_{-s} ) $ with $s>7/2$:
\begin{align}
I-M(k) & = I-M(0)+ \label{term0} \\
&-k R_V ' (0)  W \, R_U (-\la) \, W \label{term1} \\
&-\frac{k^2}{2} \lf( R_V '' (0)  W \, R_U (-\la) \, W  + 2 R_V  (0)  W \, R_U^2 (-\la) \, W \ri) \label{term2} \\
&-\frac{k^3}{6} \lf( R_V ''' (0)  W \, R_U (-\la) \, W  + 6 R_V'  (0)  W \, R_U^2 (-\la) \, W \ri) \label{term3} \\
& +O(k^4) \nonumber
\end{align}
Both \eqref{term0} and \eqref{term1} do not contribute in the expansion of \eqref{m11}: the former since $u_j\in \MM$ and the latter
due to \eqref{deri} and the cancellation condition in \eqref{canc} (remember that $R_0'(0) =1$). Now we discuss the 
first term in \eqref{term3}: for the same reasons as in the analysis of \eqref{term1} we have
\begin{align}
& (W \, R_U (-\la) \, W u_j, R_V '' (0)  W \, R_U (-\la) \, W  u_k) =\nonumber \\
&  (W \, R_U (-\la) \, W u_j,  (I- R_V (0) V) R_0 '' (0)  (I-V R_V (0) )  W \, R_U (-\la) \, W  u_k) =\nonumber  \\
&  2( R_0  (0)  (I-V R_V (0) )  W \, R_U (-\la) \, W u_j,   R_0  (0)  (I-V R_V (0) )  W \, R_U (-\la) \, W  u_k) =\label{special} \\
&  2 ( R_V  (0)   W \, R_U (-\la) \, W u_j,   R_V  (0)   W \, R_U (-\la) \, W  u_k)= \nonumber \\
& 2(u_j, u_k) 
\end{align}
Notice that in \eqref{special} we have used Lemma 2.6 of \cite{KJ} and the cancellation condition in \eqref{canc}.
For the second term \eqref{term3} we have 
\[
  (W \, R_U (-\la) \, W u_j,  R_V  (0)  W \, R_U^2 (-\la) \, W  u_k) =  ( u_j,   W \, R_U^2 (-\la) \, W  u_k)
\]
Define the matrix $A$ by
\[
A_{j,k} = (u_j, u_k) + ( u_j,   W \, R_U^2 (-\la) \, W  u_k).
\]
Since $A$ is positive definite, we can define $B=A^{-1/2}$ and $\tilde u_k = \sum_j B_{k,j} u_j$ .
Then $\{\tilde u_j \}$ is orthonormal basis of $\MM$ w.r.t. the $L^2$ inner product and 
\[
\lf(  \sum_{j,k=1}^d |u_j\ran \lan v_k| A_{j,k}  \ri)^{-1}
= 
\sum_{j=1}^d |\tilde u_j\ran \lan \tilde u_j|   W \, R_U (-\la) \, W = P_0 \,  W \, R_U (-\la) \, W
\]
By the same arguments we have
\[
m_{10 }(k)  m_{00}^{-1}(k)  m_{01}(k) = O(k^4)
\]
Collecting the above results we have
\begin{align*}
& m^{-1} (k)  = \frac{1}{k^2} P_0  W \, R_U (-\la) \, W \lf( I + k T_3 P_0  W \, R_U (-\la) \, W  +O(k^2)\ri)^{-1} =\\
&  \frac{1}{k^2} P_0  W \, R_U (-\la) \, W    -  \frac{1}{k} P_0  W \, R_U (-\la) \, W k T_3 P_0  W \, R_U (-\la) \, W  +O(1)
\end{align*}
where we have denoted by $T_3$ the expression in \eqref{term3}. Using \eqref{sgf} and the above remarks, we see that the only singular terms comes from the term $d^{-1}$ and we obtain 
\begin{align*}
&(I-M (k))^{-1}  = \\
&\frac{1}{k^2} P_0  W \, R_U (-\la) \, W    -  \frac{1}{k} P_0  W \, R_U (-\la) \, W k T_3 P_0  W \, R_U (-\la) \, W  +O(1)
\end{align*}
in $ \B_0 ( H^2_{-s}, H^2_{-s} ) $ $s>7/2$.
Since $P_0  W \, R_U (-\la) \, W \, R_V(0)= P_0$ wei finally obtain
\[
C^{-1} (k)  = \frac{1}{k^2} P_0     -  \frac{1}{k} P_0  W \, R_U (-\la) \, W T_3 P_0    +O(1)
\]
in $ \B_0 ( L^2_{s}, H^2_{-s'} ) $ with $s>1/2$, $s'>7/2$.
which proves \eqref{second}
\end{proof}
We expect that the hypothesis on the potential are not optimal w.r.t. to the decay at infinity. We have discussed the most simple expansion of the resolvent while
leaving untouched the differentiability of the remainder which is a central issue in the proof of dispersive estimates and boundedness properties of the wave
operators.

\end{document}